\newtheorem{theorem}{Theorem}
\newtheorem{remark}{Remark}
\newtheorem{prop}{Proposition}[section]
\theoremstyle{definition}
\title{Efficient semiclassical approximation for bound states in graphene in magnetic field with a~small trigonal warping correction}
\newcommand*{\email}[1]{%
    \normalsize\href{mailto:#1}{#1}\par
    }
\author{V.~V.~Rykhlov}
\affil{Lomonosov Moscow State University\\ \email{vladderq@gmail.com}}
\date{}
\begin{document}
\maketitle
\section*{Abstract}
This paper is devoted to the construction of semiclassical spectrum and efficient (simple to implement) explicit semiclassical asymptotic eigenfunctions of the Dirac operator for relatively high-energy bound states in graphene in magnetic field, considering the effect of trigonal warping~\cite{Kats, Koen} to be small. It turns out that the asymptotic spectrum of the operator remains unchanged under such a~perturbation due to the symmetry of the problem rather than the smallness of this correction. 

However, the behavior of asymptotic eigenfunctions is quite different; they are significantly affected by trigonal warping that leads to the breaking of certain symmetries. Density plots of asymptotic eigenfunctions can indicate what might be observed using a~scanning tunneling microscope. Our approach to constructing asymptotic solutions is based on developments of works~\cite{DNAiry, Dioph, Constr}, which present a~new method for constructing the solution, simplifying practical application.

\section{Introduction}
Graphene, a~single layer of carbon atoms arranged in a~hexagonal lattice, has garnered significant attention due to its exceptional electronic properties and potential applications in various fields. Recent studies, particularly those by Katsnelson~\cite{Kats}, have advanced the understanding of graphene's unique characteristics. One of phenomena appearing here is trigonal warping. This distortion of the electron band structure near the Dirac points leads to anisotropic behavior of charge carriers and impacts electronic transport properties.

Our aim is to construct the efficient representations (\textit{efficiency} here means that it must be simple and quick to compute and plot the resulting eigenfunctions using, e.g., Wolfram Mathematica or Maple) for \textit{formal asymptotic eigenfunctions} (for a~rigorous definition, see below) of the Dirac operator for graphene with trigonal warping correction, see Eq.~\eqref{main_eq_TW}. The square of the absolute value of the eigenfunction is proportional to the local density of states (LDOS)~\cite{Tersoff, Ukr} and can be measured using a~scanning tunneling microscope (STM), see also~\cite{BrDobrKatsMin}. Thus, it is expected that for the class of problems under consideration, we can predict what will be observed with STM.

Let us proceed to problem setting. Tight-binding approximation of dynamics in graphene is given by the eigenequation for 2-D~Dirac operator with a~symbol in the form of self-adjoint matrix (see \cite{Koen}):
\begin{equation}
\begin{gathered}
\widehat{H}_{D} \Psi=\mathscr E\Psi,\qquad  \widehat{H}_{D}=H_{D}(\hat k,y),\qquad
H_{D} = \hbar v_F \begin{pmatrix} 0 & k_1 -i k_2 \\ k_1+ik_2 & 0 \end{pmatrix} + m(y) \sigma_3 + u(y) \mathbf I,
\end{gathered}
\end{equation}
where $\hbar$ is the Planck constant, $k = \begin{pmatrix} k_1 \\ k_2 \end{pmatrix}$ is the wave vector and $\hat k_j = -i \partial / \partial y_j$, $v_F$ is Fermi velocity in graphene, $\sigma_3 = \begin{pmatrix} 1 & 0 \\ 0 & -1\end{pmatrix}$ is Pauli matrix and $\mathbf I$ is the identity matrix.
Consider the \textit{trigonal warping} correction (see~\cite{Kats}, \cite{Koen}) of this operator: 
\begin{equation}
H_{TW} = H_{D} - \frac{3}{8}t a_{CC}^2 \begin{pmatrix} 0 & (k_1+i k_2)^2 \\ (k_1- i k_2)^2 & 0 \end{pmatrix},
\end{equation}
where $t = 3eV$, $a_{CC} = 0.142$ nm.
Using the relationship
$\hbar v_F = \frac{3}{2}t a_{CC}$
for parameters $\hbar$, $v_F = 0.97 \cdot 10^6$ m/s, $t$ and $a_{CC}$ nm, introducing a~magnetic field term, and denoting a~typical energy by~$E_0$ and a~typical length scale by~$l$,
we are then able to transform the matrix-valued symbol to
\begin{equation}\label{TW_symbol}
H_{TW} = E_0 \mathcal L(p, x) - \frac{E_0^2}{6t} \begin{pmatrix} 0 & (\textbf{p}_1+i\textbf{p}_2)^2 \\ (\textbf{p}_1-i\textbf{p}_2)^2 & 0 \end{pmatrix},
\end{equation}
here $\mathbf{p}_j = p_j + A_j(x)$, $A_1 = \frac{Bx_2}{2}$, $A_2 = - \frac{B x_1}{2}$ ($B = \frac{e v_F}{E_0}\mathbf B l$, where $e = 1.602 \cdot 10^{-19}$  is the electron charge, $\mathbf B$ is the value of magnetic flux density), $x_j = y_j / l$, $p_j$ is a~symbol of $-ih\partial / \partial x_j$, where $h = \frac{\hbar v_F}{E_0 l}$ is a~dimensionless small parameter, and
$$
\begin{gathered}
\mathcal L(p, x) = \begin{pmatrix} U(x,h)+M(x,h)& {\mathbf{p}}_1 - i{\mathbf{p}}_2 \\
 {\mathbf{p}}_1+i{\mathbf{p}}_2& U(x,h)-M(x,h) \end{pmatrix},\quad  M(x,h) = m(x,h)/E_0, \quad U(x,h) = u(x,h)/E_0.
\end{gathered}
$$

We are constructing asymptotic \textit{solutions} of the eigenequation
\begin{equation}\label{main_eq_TW}
\widehat{\mathcal L}_{TW} \Psi = \mathscr E \Psi, \qquad \mathcal L_{TW} = \mathcal L(p, x) + \mu \begin{pmatrix} 0 & (\textbf{p}_1+i\textbf{p}_2)^2 \\ (\textbf{p}_1-i\textbf{p}_2)^2 & 0 \end{pmatrix},
\end{equation}
where $\mu \equiv h \gamma = \frac{E_0}{6t}$, and since $h$ is small, the symbol ${\mathcal L}_{TW}$ is a~small perturbation of $\mathcal L$.
It is necessary to clarify what is meant by the \textit{solution} of this equation. A~pair $(\mathscr E, \Psi(x,h))$ is said to be a~\textit{solution} of Eq.~\eqref{main_eq_TW}, if $\|(\widehat{\mathcal L}_{TW}-\mathscr E) \Psi(x,h)\|_{L^2} = O(h^{1+\delta})$, $\delta>0$. For the principal symbol to be integrable, we assume $U(x,h)$ to be radially symmetric and $M$ to be either radially symmetric or small (i.e. $M = \sqrt{h} \widetilde M$).

\begin{remark}
The mapping $f\to \widehat f$ takes the symbol $f = f(p,x)$ to the $h$-pseudodifferential operator $\widehat f = f(\overset{1}{-ih\partial/\partial x}, \overset{2} x)$, where the numbers $1$ and $2$ over the arguments indicate the order of action of the respective operators, see~\cite{Feyn}.
\end{remark}
\begin{remark}
One may consider $\mu$ as a~parameter of order $h^{1/2+\delta}$ for $\delta>0$ or even as a~parameter independent of $h$. In general, this would lead to the destruction of integrability of the principal symbol $\mathcal L(p,x)$, since the trigonal warping correction in that case should be considered as a~part of it, so the question about the behavior of the asymptotic functions in this case remains open. Nevertheless, some calculations can be done with considering $\mu$ as a~free parameter, and we will do so whenever possible.
\end{remark}

\begin{remark}[on the smoothness of the eigenfunctions]
Let us consider $\widehat {\mathcal L}_{TW}$ as a~classical pseudodifferential operator. The principal symbol (in terms of order of differentiation instead of order with respect to the power of $h$) of operator~\eqref{main_eq_TW} is given by
\begin{equation}\label{class_prin_sym}
\mathcal L^{(2)} = \mu h^2 \begin{pmatrix} 0 & (\xi_1 + i \xi_2)^2 \\ (\xi_1 - i \xi_2)^2 & 0  \end{pmatrix},
\end{equation}
where $\xi_j$ stands for the symbol of $-i \partial / \partial x_j$.
Since the symbol \eqref{class_prin_sym} is 
nondegenerate (here $\operatorname{det} \mathcal L^{(2)} = - \mu h^2 |\xi|^4$) for any $\xi \in  \mathbb S^2$, any fixed $h>0$ and $\mu \ne 0$, then
$\widehat {\mathcal L}_{TW}$ is elliptic. In other words, it is nonuniformly elliptic with respect to~$h$ and~$\mu$ in any neighborhood of zero.
Nevertheless, $h$~and~$\mu$ can be considered constant for any (exact) solution~$\Psi$ of the eigenequation (i.e. $\widehat {\mathcal L}_{TW} \Psi = \mathscr E \Psi$), thus, $\Psi \in C^\infty$. It is important to note that our method allows us to construct \textit{formal asymptotic eigenfunctions} \cite{Constr}, rather than asymptotics for \textit{exact} eigenfunctions. However, their belonging to the same class of functions gives hope that one day it will be possible to prove that our asymptotic eigenfunctions approximate the exact ones. E.g., in~\cite{AnDobrTsv} it was proved for one-dimensional Schr\"odinger operator with simple spectrum.
\end{remark}

\section{Scalarization}
As in previous paper~\cite{Constr}, we reduce the equation for the matrix-valued operator to a~pair of scalar problems. In order to do so, we find matrix-valued symbols $\chi=\chi (p,x,h)$ and $\mathbb H=\mathbb H(p,x,h)$ such that
$$
\chi = \chi_0+h\chi_1+O(h^2),\quad \mathbb H = \begin{pmatrix} \mathbb H^+ & 0 \\ 0 & \mathbb H^-\end{pmatrix},\quad \widehat {\mathcal L}_{TW}(p,x) \widehat\chi = \widehat\chi \widehat{\mathbb H} + O(h^2),
$$
where the last relation means that the symbols of the operators on the left- and right-hand sides differ by $O(h^2)$.
The principal symbols of $\widehat{\mathbb H}^\pm$ (Hamiltonians for electron and hole bands~\cite{Koen}) are eigenvalues of the matrix-valued symbol $\mathcal L(p,x)$:
$$
\mathcal L(p, x) \chi_0^\pm = \mathbb H_0^{\pm} \chi_0^\pm, \qquad  \mathbb H_0^{\pm} = U - E \pm \sqrt{M^2+\mathbf{p}^2},\qquad
\chi_0 = \begin{pmatrix} \chi_{0,1}^+ & \chi_{0,1}^-\\ \chi_{0,2}^+ & \chi_{0,2}^-\end{pmatrix},
$$
where~$\chi$ is the matrix whose columns are the corresponding eigenvectors, moreover,
$$
\begin{gathered}
\|\chi_0^\pm\| \equiv \sqrt{|\chi_{0,1}^\pm|^2+|\chi_{0,2}^\pm|^2} = 1, \quad
\chi_0^\pm = \frac{1}{\sqrt{2 (M^2 + \mathbf{p}^2 \pm M \sqrt{M^2 + \mathbf{p}^2})}} \begin{pmatrix} \mathbf{p}_1 - i \mathbf{p}_2 \\ -M \mp \sqrt{M^2 + \mathbf{p}^2} \end{pmatrix}.
\end{gathered}
$$
Note that the multiplicity of these eigenvalues is identically equal to~$1$ in the domain where $M^2+\mathbf{p}^2 \ne 0$. At the points where $M^2+\mathbf{p}^2 = 0$, the multiplicity changes and that leads to many difficulties. In what follows, we impose natural conditions so that the multiplicity remains constant.

Further, we use the well-known formula for symbol of the first correction (see \cite{4authorsAiry}):
$$
\mathbb H_{1, TW}^{\pm}= \Big\langle\overline{\chi^{\pm}_0}, \mathcal L_1 {\chi^{\pm}_0}\Big\rangle  -i\Big\langle\overline{\chi^{\pm}_0},\frac{d}{dt}\chi^{\pm}_0 \Big\rangle-i\Big\langle\overline{\chi^{\pm}_0},\sum_{j=1}^2\Big(\frac{\partial \mathcal{L}}{\partial p_j}-\mathbf{I}\frac{\partial \mathbb{H}_{\pm}^0}{\partial p_j}\Big)\frac{\partial \chi^{\pm}_0}{\partial x_j}\Big\rangle,
$$
where $\langle \cdot, \cdot \rangle$ is the standard dot product,  $\mathbf I$ is the identity matrix,  $\mathcal L_1$ is the first correction to the matrix-valued symbol (i.e., $\mathcal L_1 = \gamma \begin{pmatrix} 0 & (\textbf{p}_1+i\textbf{p}_2)^2 \\ (\textbf{p}_1-i\textbf{p}_2)^2 & 0 \end{pmatrix}$),
$\frac{d}{dt}$ is the derivative along the Hamiltonian vector field ${\mathit v}_{\mathbb H_0^\pm}$.
It is easy to see, that the scalar correction formulas for $\mathcal L_{TW}$ and $\mathcal L$ differ only by term depending on $\mathcal L_1$.
Thus, 
$$
\mathbb H_{1, TW}^\pm =\mathbb H_{1}^\pm - \gamma \frac{(M \pm \sqrt{M^2 + \mathbf{p}^2})(\mathbf{p}_1^3 - 3\mathbf{p}_1 \mathbf{p}_2^2)}{M^2 + \mathbf{p}^2 \pm M\sqrt{M^2 +\mathbf{p}^2}},
$$
where (see \cite{Constr})
$$
\mathbb{H}_1^\pm = \pm \frac{B}{2\sqrt{\mathbf{p}^2 + M^2}} +
\frac{(\mathbf{p}_\perp, \nabla(U+M))}{2\sqrt{\mathbf{p}^2+M^2}(\sqrt{\mathbf{p}^2+M^2} \mp M)} -
\frac{i}{2}{\rm tr}\,\frac{\partial^2 \mathbb{H}_\pm^0}{\partial p \partial x}, \qquad \mathbf{p}_\perp = \begin{pmatrix} - \mathbf{p}_2 \\ \mathbf{p}_1 \end{pmatrix}.
$$

\begin{remark}
Considering $\mu$ as a~free parameter, we get the following Hamiltonians
\begin{equation}
\mathbb{H}_{0, TW}^{\pm} = U \pm \sqrt{M^2 + |\textbf{p}|^2 + 2 \mu(\textbf{p}^3-3\textbf{p}_1\textbf{p}_2^2)+\mu^2 |\textbf{p}|^4},
\end{equation}
and eigenvectors that correspond to them are as follows
\begin{equation}
\chi_{0, TW}^{\pm} = \frac{1}{\sqrt{2 (\xi \mp M \sqrt{\xi} )}} \begin{pmatrix} \textbf{p}_1 - \textbf{p}_2 +\mu (\textbf{p}_1+i\textbf{p}_2)^2  \\ -M \pm \sqrt{\xi}\end{pmatrix}
,\quad \xi = M^2+|\textbf{p}|^2 + 2 \mu (\textbf{p}_1^3-3\textbf{p}_1\textbf{p}_2^2)+\mu^2 |\textbf{p}|^4.
\end{equation}
Under the assumption that $\mu = O(h^\beta)$, $0<\beta < 1$, the correction $\mathbb{H}_1^\pm$ (see below) remains the same as in the unperturbed case, as the trigonal warping then is included into the Hamiltonian.
\end{remark}

\section{Polar Coordinates and Main Result}
Further, using the semiclassical analog of Maupertuis--Jacobi principle (see~\cite{Constr}, and also~\cite{AniDobrKlevTir, DobrMinRo}), we bring our problem to the following:
\begin{equation}\label{HTW_Cartesian}
\mathcal H_{TW}(\overset{1}{\hat p}, \overset{2}{x}, E, \lambda, h) \psi = 0, \quad \mathcal H_{TW}(p,x,E,\lambda, h) = \mathcal H_0(p,x,E) + h \mathcal H_{1, TW}^\pm(p,x,E, \lambda),
\end{equation}
$$
\mathcal H_0(p,x, E) = (U(x)-E)^2 - (\mathbf{p}^2+M^2), \qquad
\mathcal H_{1, TW}^\pm =  \mathcal H_{1}^\pm+ 2 \gamma (\mathbf{p}_1^3 - 3 \mathbf{p}_1 \mathbf{p}_2^2),
$$
$$
\mathcal{H}_1^\pm =  - B \pm \frac{\langle \mathbf{p}_\perp, \nabla_x (U+M) \rangle}{\sqrt{M^2 + \mathbf{p}^2} \mp M}
- 2 (U-E) \lambda
\mp i  \frac{\langle \mathbf{p}, \nabla_x U \rangle}{\sqrt{M^2 + \mathbf{p}^2}}.
$$
\begin{remark}
As seen from formula for $\mathcal H_{1, TW}^\pm$, considering
trigonal warping as a~small perturbation of the Dirac equation, we must just
add a~term $2 \gamma (\mathbf{p}_1^3 - 3 \mathbf{p}_1 \mathbf{p}_2^2)$
to the correction, which leads to a~crucial change in the solution of the transport equation (the amplitude).
\end{remark}

The standard approach \cite{Masl, CdV, LazKAM} of constructing spectral series for an~operator with an~integrable principal symbol is to consider a~family of invariant Liouville tori of the Hamiltonian system (in our two-dimensional case, this family is two-parametric) and take a~discrete subfamily consisting of tori~$\Lambda$ satisfying the Bohr--Sommerfeld--Maslov quantization condition
$$
\frac{1}{2\pi} \oint_{\gamma_j} p\, dx = h\Big(k_j + \frac{m_j}{4}\Big),\quad k_j\in\mathbb Z,\quad j = 1,2,\dots,\operatorname{dim}\Lambda,
$$
for all basis cycles $\gamma_j\subset\Lambda$, where $m_j$ are the Maslov indices of these cycles. In that case, the Maslov canonical operator $K_\Lambda$  (see~\cite{Masl}; for simplicity, it can be interpreted as a~generalization of the WKB method), which takes an~\textit{amplitude} $A\in C^\infty(\Lambda,\mathbb C)$ to a~smooth rapidly oscillating function $[K_\Lambda A](x,h)$, is well-defined. If, moreover, $\lambda \in\mathbb C$, $A \in C^\infty(\Lambda,\mathbb C)$ is a~solution of the \textit{transport equation} (see~\cite{Masl}), then $K_\Lambda A$ is an~asymptotic eigenfunction of the given operator corresponding to an~eigenvalue $\mathscr E = E+h\lambda$.

The following theorem demonstrates how to reconstruct a~solution of the vector problem from the solution of the scalar one, provided the latter is represented by the canonical operator.

\begin{prop}[\cite{Constr}, Proposition~2]
Let Lagrangian torus $\Lambda \subset \mathbb{R}^4_{(p,x)}$ and parameter~$E$ be chosen such that quantization conditions (see \cite{Constr}) are met, and
$$
\mathcal H_0 (p,x,E) \big|_\Lambda = 0, \quad \Lambda \subset \{(p,x) \big| \pm (U(x)-E)>\delta>0 \}.
$$
Let $A \in C^\infty(\Lambda, \mathbb{C})$, $\psi_\pm = K_\Lambda A$ and parameters
$E$, $\lambda$ be such that $\mathcal H_{TW}^\pm (\overset{1}{\hat p}, \overset{2}{x}, E, \lambda, h) \psi_\pm = \mathscr O(h^{1+\alpha})$ for
some $\alpha >0$. Then $\mathscr E = E+h\lambda$, $\Psi_\pm = \widehat \chi_\pm \psi_\pm$ is an~asymptotic solution of \eqref{main_eq_TW}, i.e.
$$
\Big( \widehat{\mathcal L}_{TW} - (E+h\lambda) \Big) \Psi_\pm = \mathscr O(h^{1+\alpha}).
$$
\end{prop}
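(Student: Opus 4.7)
The plan is to combine the scalarization identity from Section~2 with the operator-level Maupertuis--Jacobi correspondence used in Section~3 to produce the scalar problem~\eqref{HTW_Cartesian}. First, I apply the intertwining relation $\widehat{\mathcal{L}}_{TW}\widehat\chi = \widehat\chi\widehat{\mathbb{H}} + \mathscr O(h^2)$ established in Section~2 to a vector supported in the $\pm$-th column (either $(\psi_\pm,0)^T$ or $(0,\psi_\pm)^T$). Extracting the relevant component gives
\[
\bigl(\widehat{\mathcal{L}}_{TW}-(E+h\lambda)\bigr)\Psi_\pm = \widehat\chi_\pm\bigl(\widehat{\mathbb{H}}^\pm-(E+h\lambda)\bigr)\psi_\pm + \mathscr O(h^2)\|\psi_\pm\|_{L^2}.
\]
Because $\psi_\pm = K_\Lambda A$ is uniformly $L^2$-bounded and $\widehat\chi_\pm$ is $L^2$-bounded as a zeroth-order $h$-pseudodifferential operator, the proposition reduces to showing $\bigl(\widehat{\mathbb{H}}^\pm - (E+h\lambda)\bigr)\psi_\pm = \mathscr O(h^{1+\alpha})$.

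The second step is the Maupertuis--Jacobi factorization at the operator level. Using the convention $\mathbb{H}_0^\pm = U-E\pm\sqrt{M^2+\mathbf{p}^2}$, the principal symbol splits as
\[
\mathcal{H}_0 = (U-E)^2 - (\mathbf{p}^2+M^2) = \mathbb{H}_0^+\,\mathbb{H}_0^-,
\]
and the hypothesis $\pm(U-E)>\delta$ ensures that exactly one of the two factors vanishes on $\Lambda$ while the other, call it $f_\pm$, is bounded below by $\delta$ on a neighborhood of $\Lambda$. A direct computation with Feynman-ordered products of $h$-pseudodifferential operators---using the explicit formulas for $\mathbb{H}_{1,TW}^\pm$ and $\mathcal{H}_{1,TW}^\pm$---produces the operator identity
\[
\widehat{\mathcal{H}}_{TW}^\pm = \widehat g_\pm\bigl(\widehat{\mathbb{H}}^\pm-(E+h\lambda)\bigr) + \mathscr O(h^2),
\]
with principal symbol of $\widehat g_\pm$ equal to $f_\pm$ and therefore elliptic near $\Lambda$.

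Third, since $\widehat g_\pm$ is microlocally elliptic near $\Lambda$ and the wavefront set of $K_\Lambda A$ is contained in $\Lambda$, I construct a microlocal parametrix $\widehat g_\pm^{-1}$ on that neighborhood and apply it to the hypothesis $\widehat{\mathcal{H}}_{TW}^\pm \psi_\pm = \mathscr O(h^{1+\alpha})$, obtaining
\[
\bigl(\widehat{\mathbb{H}}^\pm-(E+h\lambda)\bigr)\psi_\pm = \widehat g_\pm^{-1}\widehat{\mathcal{H}}_{TW}^\pm \psi_\pm + \mathscr O(h^{1+\alpha}) = \mathscr O(h^{1+\alpha}).
\]
Combined with the first step this establishes the proposition.

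The main obstacle is the second step: one must verify that the particular $\mathcal{H}_{1,TW}^\pm$ given in~\eqref{HTW_Cartesian}---in particular the shift $-2(U-E)\lambda$ absorbing the $h\lambda$ correction to the eigenvalue, and the trigonal-warping term $2\gamma(\mathbf{p}_1^3-3\mathbf{p}_1\mathbf{p}_2^2)$---is precisely the $O(h)$ symbol produced by the ordered product of $\widehat g_\pm$ and $\widehat{\mathbb{H}}^\pm-(E+h\lambda)$. This is essentially the inverse of the reduction carried out earlier and is mechanical but error-prone; the remaining steps are standard consequences of microlocal calculus.
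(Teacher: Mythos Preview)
Your approach is correct and is essentially the same as the paper's: scalarization via the intertwining identity, followed by the Maupertuis--Jacobi factorization $\mathcal H_0=(\mathbb H_0^+-E)(\mathbb H_0^--E)$ and microlocal inversion of the elliptic factor, with the only nontrivial new ingredient being the subprincipal-symbol check for the trigonal-warping term. The paper simply cites \cite{Constr} for the unperturbed argument and carries out that one remaining check explicitly, observing that on $\Lambda$ the identities $\mathbb H_0^\mp\big|_\Lambda=2(U-E)$ and $\pm\sqrt{M^2+\mathbf p^2}\big|_\Lambda=E-U$ reduce $(\mathbb H_0^\mp-E)\cdot\bigl(\text{TW part of }\mathbb H_{1,TW}^\pm\bigr)$ immediately to $2\gamma(\mathbf p_1^3-3\mathbf p_1\mathbf p_2^2)$---so the step you flag as ``mechanical but error-prone'' is in fact a one-line algebraic simplification once restricted to $\Lambda$.
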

\begin{remark}
$\chi^\pm$ can be expanded into series $\chi_0^\pm + h\chi_1^\pm+O(h^2)$. We do not provide the formula for~$\chi_1^\pm$ due to its substantial size.
Considering only $\chi_0^\pm$ instead of $\chi^\pm$, we obtain a~leading-order term of an~asymptotic solution.
\end{remark}
\begin{remark}
The residual, denoted here by $\mathscr O(h^{1+\alpha})$, is understood in the sense of the space $\mathbf{H}^{s,h}_{\rm loc}$, where $\| f(x,h)\| = \int |\widetilde f| (1+|p|^2)^{s/2}$ and $\widetilde f$ is the $h$-Fourier transform of $f$, see~\cite{Constr}.
\end{remark}
\begin{proof}
This proposition has already been proved in \cite{Constr} for $\mathcal H_0$ and $\mathcal H_1^\pm$.
To complete the proof, it is necessary to show that $(\mathbb{H}_0^\mp - E) \big (- \gamma \frac{(M \mp \sqrt{M^2 + \mathbf{p}^2})(\mathbf{p}_1^3 - 3\mathbf{p}_1 \mathbf{p}_2^2)}{M^2 + \mathbf{p}^2 \pm M\sqrt{M^2 +\mathbf{p}^2}} \big) \big|_\Lambda =
2\gamma (\mathbf{p}_1^3 - 3 \mathbf{p}_1 \mathbf{p}_2^2)$. It follows immediately from the identities $\mathbb{H}_0^\mp \big|_\Lambda = 2(U - E)$ and $\pm \sqrt{M^2 +\mathbf{p}^2} \big|_\Lambda  = E - U$.
\end{proof}

Recalculating the symbol $\mathcal H_{TW}$ (see Eq.~\eqref{HTW_Cartesian}) to the polar coordinates by the formulas
$$
(x_1,x_2) = r(\cos{\varphi}, \sin{\varphi}), \quad
p_1 = p_r\cos{\varphi}-\frac{p_\varphi\sin{\varphi}}{r},\quad
p_2 = p_r\sin{\varphi}+\frac{p_\varphi\cos{\varphi}}{r}
$$
and changing the sign for convenience (we are seeking the asymptotic eigenfunctions, so the omitted sign does not affect anything),
for the small $M = \sqrt{h} \widetilde M$ we get (with arguments of functions omitted)
$$
\begin{gathered}
H_0 = p_r^2+\mathbf{R}_\varphi^2- (U-E)^2,\\
H_{1, TW}^\pm = B -\frac{\mathbf{R}_\varphi \partial U/\partial r}{U-E} + \widetilde M^2 + 2(U-E)\lambda -2\gamma \Big(p_r (p_r^2 - 3\mathbf{R}_\varphi^2)\cos{3\varphi} + \mathbf{R}_\varphi(\mathbf{R}_\varphi^2-3 p_r^2)\sin{3\varphi}\Big) + i p_r W^+,
\end{gathered}
$$
where $\mathbf{R}_\varphi = \frac{p_\varphi}{r}-\frac{Br}{2}$ and $W^+=\frac{\partial U/\partial r}{U-E}$ (note that there is a~little inaccuracy in~\cite{Constr}, where ``$\mp$'' instead of ``$-$'' appears), and
 for the radially symmetric $M=M(r)$ we get 
$$
\begin{gathered}
H_0 = p_r^2+\mathbf{R}_\varphi^2 + M^2 - (U-E)^2,\\
H_{1, TW}^\pm = B - \frac{\mathbf{R}_\varphi \partial (U+M)/\partial r}{U-E+ M} + 2(U-E)\lambda - 2\gamma \Big(p_r (p_r^2 - 3\mathbf{R}_\varphi^2)\cos{3\varphi} + \mathbf{R}_\varphi(\mathbf{R}_\varphi^2-3 p_r^2)\sin{3\varphi}\Big) + ip_r W^\pm,
\end{gathered}
$$
where $W^- = \frac{\partial U/\partial r}{U-E} + 2 M \frac{\partial M/\partial r}{(U-E)^2}$.

When changing coordinates from~$x$ to $y$, the solution expressed as a~canonical operator in $y$ must be multiplied by $\sqrt{\operatorname{det} \frac{\partial y}{\partial x}}$ to yield the solution in the original $x$ coordinates, see~\cite{OpMet}. It is taken into account in formula~\eqref{leading_term_of_vector_sol}.

For completeness, let us briefly outline the results from~\cite{Constr}.
Let~$r = \mathcal R(t)$, $p_r = \mathcal P(t)$ be a~$T$-periodic solution of dynamical equations on~$\Lambda$
$$
\dot r = \frac{\partial H_0}{\partial p_r},\quad \dot p_r = -\frac{\partial H_0}{\partial r},
$$
$r_-$, $r_+$ be the endpoints of the segment, which is a~projection of the curve $(\mathcal R(t), \mathcal P(t))$ to $r$-axis,
and $\theta_1 = 2\pi t/T\ \operatorname{mod} 2\pi$. Denote
$$
\begin{gathered}
U_{\rm eff}(r) = \mathbf{R}_\varphi(r,p_\varphi)^2 + M(r)^2 - (U(r)-E)^2 \operatorname{mod} O(h),\quad
\mathbf{\Phi}(r_1,r_2) = \int_{r_1}^{r_2} \sqrt{-U_{\rm eff}(\tilde r)}\, d\tilde r,\\ \mathbf{\Phi}^{\rm out}(r) = \mathbf{\Phi}(r,r_+),\quad \mathbf{\Phi}^{\rm in}(r) = \mathbf{\Phi}(r_-,r),\quad
\beta^{\rm in} = \mathbf{\Phi}(r_-, r_+),\quad \beta^{\rm out}=0,\\ Q(\theta_1) = \int_0^{\theta_1} \Big(\frac{1}{R^2(\widetilde\theta)} - \Omega\Big)\, d\widetilde\theta, \quad
\Omega = \frac{1}{2\pi}\int_0^{2\pi} \frac{d\widetilde\theta}{R^2(\widetilde\theta)}, \quad \omega_1 = \frac{2\pi}{T},
\end{gathered}
$$
here $\operatorname{mod} O(h)$ removes the mass if it is small, ``$\rm in/out$'' means the caustic from which paths with respective phases are issued ($r=r_-$ and $r=r_+$ respectively). Denote $D_{\rm in} \equiv \{ (r,\varphi) | r\in [r_-, r_+-\delta]\}$, $D_{\rm out} \equiv \{ (r,\varphi) | r\in [r_- + \delta, r_+]\}$ for some $\delta>0$.
Let also $\theta_2 = \varphi - \frac{2p_\varphi}{\omega_1}Q(\theta_1)$, $\omega_2=-B+2p_\varphi \Omega$; let
$\Theta^\pm(r,\varphi)$ be a~pair of inverse mappings of the projection $\Lambda \to \mathbb R^2_{r,\varphi}$ to the coordinate space ("$+$" for $p_r>0$ and "$-$" for $p_r<0$),  and $\theta^0 \in\Lambda$ be an~initial point, see~\cite{Masl}, corresponding to $(r_0,\varphi_0)$. Dynamics on $\Lambda$ admits an~invariant measure(volume form) $d\mu = d\theta_1\wedge d\theta_2$. Introduce
$$
\mathscr A_\pm(r,\varphi) = A\Big(\Theta^\pm_1(r), \varphi - \frac{2p_\varphi}{\omega_1} Q(\Theta^\pm_1(r))\Big),\quad \mathscr A_{\rm ev/odd}(r,\varphi) = \frac{\mathscr A_+(r,\varphi)\pm \mathscr A_-(r,\varphi)}{2}.
$$

The standard semiclassical theory allows one to construct the solution of the eigenequation~\eqref{main_eq_TW} if and only if quantization conditions are met (see~\cite{Masl}). These conditions can be in contradiction with the Diophantine condition for the frequencies $\omega_1$, $\omega_2$, which is necessary for the transport equation to be solvable \cite{Dioph}. Our approach provides a~way to construct a~solution even if quantization conditions are violated, more precisely, in case the Diophantine torus $\overline\Lambda$ lies in $O(h)$-neighborhood of the Bohr--Sommerfeld torus $\Lambda$ \cite{Dioph}, \cite{Constr}.

For our problem, action variables, which are canonically conjugate to $\theta_1$, $\theta_2$, are given by
$$
I_1 = \frac{1}{\pi}\int_{r_-}^{r_+} \sqrt{-U_{\rm eff}(r)} dr,\quad I_2 = p_\varphi,
$$
and quantization conditions are as follows
$$
I_1 = h \Big(\nu_1+\frac{1}{2}\Big),\quad I_2 = h\nu_2,
$$
see~\cite{Constr}. Define the \textit{action defect} (see~\cite{Constr}, and also~\cite{Laz}) by $
q_\nu \equiv (q_1, q_2) = \Big(h \Big(\nu_1+\frac{1}{2}\Big) - I_1, h\nu_2 - I_2\Big)$, 
where~$I_1$,~$I_2$ are the actions of $\overline\Lambda$. According to~\cite[Theorem~1]{Dioph}, the canonical operator $K_{\overline\Lambda}[\overline A e^{i \langle q, \theta\rangle /h}]$ is well-defined. Now we are ready to proceed to the following theorem.

\begin{theorem}[\cite{Constr}, Theorem~4, with some redesignation]\label{TheoremConstr}
Let $\lambda \in\mathbb R$ be such that
\begin{equation}\label{lambda_condition}
\iint_{\overline\Lambda} \operatorname{Re} H_{1,TW}(p,x,E,\lambda) \big|_{\overline \Lambda}\, d\theta_1 d\theta_2 = 0,
\end{equation}
the vector of frequencies $(\omega_1,\omega_2)$ of $\overline \Lambda$ be Diophantine, $q_\nu = O(h)$, 
$a_\pm(\theta)$ be a~solution of the reduced transport equation
$$
-i \frac{da_\pm}{dt} + \operatorname{Re}\mathcal H_{1, TW}^\pm\big|_{\overline\Lambda} a_\pm = 0,
$$
and the condition
$$
\overline\Lambda\subset \{(r,\varphi,p_r,p_\varphi)\big| \pm (U(r)-E)>0\}
$$
be met.
Set 
$$
A_\nu^\pm =  \frac{1}{2} \int W^\pm\, dr e^{ a_\pm(\theta) + \frac{i}{h}\langle q, \theta\rangle}.
$$ 
Then the pair $(\mathscr E, \Psi_\pm)$, where
$$
\mathscr E \equiv E_\nu + h\lambda,\quad \Psi_\pm(x) := \widehat \chi^\pm \big( \sqrt{r} K_{\overline\Lambda} A_\nu^\pm \big|_{(r,\varphi)\to (x_1,x_2)}\big),
$$
is a~solution of problem~\eqref{main_eq_TW}. Namely, $\|\widehat{\mathcal L}\Psi_\pm - \mathscr E\Psi_\pm\|_{L^2} = O(h^{3/2})$, where $K_{\overline\Lambda}\equiv K_\Lambda$ admits the representation
\begin{equation}\label{main_repr}
\begin{gathered}
[K_{\Lambda} A_\nu^\pm]^{\rm in/out}(r,\varphi) \asymp \sqrt{2\pi\omega_1} \frac{e^{\frac{i}{h} \big(p_\varphi(\varphi-\varphi_0) - \beta^{\rm in/out} +\mathbf \Phi(r_0,r_+)\big)}}{|U_{\rm eff}(r)|^{1/4}}
\Big[\sigma^{\rm in/out}_1 \mathscr A_{\rm ev}(r,\varphi) \Big(\frac{3\mathbf \Phi^{\rm in/out}(r)}{2h}\Big)^{1/6}\\ \times {\rm Ai}\Big(-\Big(\frac{3\mathbf \Phi^{\rm in/out}(r)}{2h}\Big)^{2/3}\Big) 
+ \sigma^{\rm in/out}_2 \mathscr A_{\rm odd}(r,\varphi) \Big(\frac{3\mathbf \Phi^{\rm in/out}(r)}{2h}\Big)^{-1/6} {\rm Ai}'\Big(-\Big(\frac{3\mathbf \Phi^{\rm in/out}(r)}{2h}\Big)^{2/3}\Big)\Big],
\end{gathered}
\end{equation}
where $\sigma_1^{\rm out} = \sigma_2^{\rm in} = e^{\pi i/4}$ and $\sigma_2^{\rm out}=\sigma_1^{\rm in}  = e^{\pi i/4}$, in the domain $D_{\rm in/out}$. Here $\asymp$ means equality modulo $O(h)$ terms in the amplitude $A(\theta)$, and the vector-valued function
\begin{equation}\label{leading_term_of_vector_sol}
\widetilde{\Psi}_\pm = \big(\sqrt{r} K_{\overline \Lambda}[\chi_0^\pm A_\nu^\pm])\big|_{(r,\varphi)\to(x_1,x_2)}
\end{equation}
is the leading-order term of the asymptotic solution.
\end{theorem}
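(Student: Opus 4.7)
The plan is to reduce the theorem to Theorem~4 of \cite{Constr}, verifying that the trigonal warping contribution $-2\gamma\bigl(p_r(p_r^2-3\mathbf{R}_\varphi^2)\cos 3\varphi+\mathbf{R}_\varphi(\mathbf{R}_\varphi^2-3p_r^2)\sin 3\varphi\bigr)$ to $\operatorname{Re}\mathcal H_{1,TW}^\pm$ does not break any structural ingredient of the original construction. The first step is to produce the scalar asymptotic solution $\psi_\pm = \sqrt{r}\,K_{\overline\Lambda}A_\nu^\pm$ of $\widehat{\mathcal H}_{TW}\psi_\pm=O(h^{1+\alpha})$. Since $\mathcal H_0$ vanishes on $\overline\Lambda$ up to the action defect $q_\nu=O(h)$, the standard commutation rule for the canonical operator produces the transport equation at order~$h$; splitting $\mathcal H_{1,TW}^\pm$ into real and imaginary parts $\operatorname{Re}\mathcal H_{1,TW}^\pm+ip_r W^\pm$, the factor $\tfrac12\int W^\pm\,dr$ appearing in $A_\nu^\pm$ serves as an integrating factor absorbing the imaginary term $ip_r W^\pm$, exactly as in \cite{Constr}, so that the remaining reduced transport equation is the real equation in the statement.

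The heart of the argument is the solvability of $-i\,da_\pm/dt+\operatorname{Re}\mathcal H_{1,TW}^\pm|_{\overline\Lambda}\,a_\pm=0$ on the torus. Passing to angle coordinates $(\theta_1,\theta_2)$ in which the flow is straight with Diophantine frequencies $(\omega_1,\omega_2)$ and Fourier-expanding reduces the equation to division by small divisors $k_1\omega_1+k_2\omega_2$: polynomial lower bounds on these divisors together with the smoothness of $\operatorname{Re}\mathcal H_{1,TW}^\pm$ (whose trigonal warping term is a trigonometric polynomial in $\varphi$ and a polynomial in $p_r,\mathbf{R}_\varphi$) yield an absolutely convergent Fourier series for $a_\pm$. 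The only obstruction is at the zero Fourier mode, and its vanishing is precisely condition~\eqref{lambda_condition}; the $\lambda$-dependent piece $2(U-E)\lambda$ inside $\operatorname{Re}\mathcal H_{1,TW}^\pm$ makes \eqref{lambda_condition} a linear equation in~$\lambda$ with nonzero coefficient (thanks to the hypothesis $\pm(U-E)>0$), uniquely fixing the spectral shift. Combining these Diophantine estimates with the shifted canonical operator of \cite{Dioph} on the nearby torus $\overline\Lambda$ is the step I expect to require the most care.

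Once $a_\pm$ is constructed, the preceding Proposition lifts the scalar asymptotic solution to the vector solution $\Psi_\pm=\widehat\chi^\pm\psi_\pm$, yielding the $O(h^{3/2})$ residual in the statement as in \cite{Constr}. The explicit Airy-type formula~\eqref{main_repr} then follows by substituting the standard fold-caustic representation of the Maslov canonical operator at each of the two turning points $r=r_\pm$: the cubic-phase oscillatory integral produces $\mathrm{Ai}$ and $\mathrm{Ai}'$, the splitting $\mathscr A_{\rm ev/odd}=(\mathscr A_+\pm\mathscr A_-)/2$ of the pullback of $A$ to the two sheets $p_r\gtrless 0$ gives the two Airy terms with prefactors $(\mathbf\Phi^{\rm in/out}/h)^{\pm 1/6}$, and the Maslov phase assignments at the caustics produce the coefficients $\sigma_{1,2}^{\rm in/out}=e^{i\pi/4}$. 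The $\sqrt{r}$ factor in~\eqref{leading_term_of_vector_sol} is the Jacobian $\sqrt{\det\partial y/\partial x}$ that arises when translating the polar expression back to Cartesian coordinates, cf.~\cite{OpMet}. Since the trigonal warping correction does not alter the topology of $\overline\Lambda$ or the geometry of its caustics, these phase data coincide with those of \cite{Constr}.
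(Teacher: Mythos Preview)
The paper does not supply an independent proof of this theorem: it is quoted, with notational adjustments, as Theorem~4 of~\cite{Constr}, and the only trigonal-warping--specific verification in the present paper occurs earlier, in the short proof of Proposition~3.1, where the additional term $2\gamma(\mathbf p_1^3-3\mathbf p_1\mathbf p_2^2)$ in the scalarized subprincipal symbol is checked explicitly. Everything else is deferred to the cited references.

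Your sketch therefore goes well beyond what the paper itself records and is, in effect, a faithful reconstruction of the argument of~\cite{Constr} (together with~\cite{Dioph} for the action-defect/shifted canonical operator and~\cite{DNAiry} for the Airy representation near a simple caustic): reduction to the scalar problem on $\overline\Lambda$, absorption of the imaginary piece $ip_rW^\pm$ by the integrating factor $\tfrac12\int W^\pm\,dr$, solvability of the reduced transport equation via Fourier expansion and Diophantine small divisors with~\eqref{lambda_condition} eliminating the zero mode, the lift to the vector problem through the preceding Proposition, and finally the fold-caustic Airy formula at $r=r_\pm$ producing the $\mathscr A_{\rm ev/odd}$ splitting and the $e^{i\pi/4}$ phases. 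Your remark that the trigonal-warping contribution is a trigonometric polynomial in $\varphi$ with smooth coefficients in $\theta_1$---so that it neither enlarges the set of active Fourier modes beyond $k_2\in\{-3,0,3\}$ nor spoils the decay needed for the small-divisor series---is precisely why the extension from $\mathcal H_1^\pm$ to $\mathcal H_{1,TW}^\pm$ is automatic, and it anticipates the structure exploited in the paper's subsequent Theorem~3.
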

\begin{remark}
Conditions for~$(\omega_1,\omega_2)$ we impose can be weakened; it suffices to require $\omega_j$ to be nonresonant instead of Diophantine, i.e. $k_1 \omega_1 + k_2\omega_2 \ne 0$ for all $k = (k_1,k_2) \in\mathbb Z^2\setminus\{0\}$ such that k-th harmonic (k-th Fourier coefficient) of $\operatorname{Re}\mathcal H_{1, TW}^\pm\big|_{\overline\Lambda}$ is nonzero.
\end{remark}
\begin{remark}
$\frac{1}{2}\int W^\pm (r)\, dr$ is necessary to cancel with the imaginary part of correction in the transport equation. For~$W^+$, this factor becomes $\sqrt{U(r)-E}$.
\end{remark}

We should also note an~interesting fact that the trigonal warping correction does not affect the asymptotic spectrum of $\widehat{\mathcal L}$  due to the radial symmetry in its principal symbol. Indeed, let 
$\lambda$ be a~correction to an~asymptotic eigenvalue~$E_\nu + O(h)$ of the operator~$\widehat L$, i.e. $E_\nu +h\lambda + O(h^2)$ be an~asymptotic eigenvalue of~$\widehat{\mathcal L}$. Similarly, let $E_\nu + h\lambda_{TW}+O(h^2)$ be an~asymptotic eigenvalue of $\widehat{\mathcal L}_{TW}$.
\begin{theorem}
$\lambda = \lambda_{TW}$.
\end{theorem}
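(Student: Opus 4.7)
The plan is to compare the conditions that determine $\lambda$ and $\lambda_{TW}$ and to show that the extra trigonal-warping contribution integrates to zero over $\overline\Lambda$. For the operator $\widehat{\mathcal L}$ the correction $\lambda$ is fixed by $\iint_{\overline\Lambda} \operatorname{Re}\mathcal H_1^\pm\, d\theta_1\, d\theta_2 = 0$, and for $\widehat{\mathcal L}_{TW}$ the correction $\lambda_{TW}$ satisfies the analogous identity~\eqref{lambda_condition} with $\mathcal H_{1,TW}^\pm = \mathcal H_1^\pm + 2\gamma(\mathbf p_1^3 - 3\mathbf p_1 \mathbf p_2^2)$. In both expressions the $\lambda$-dependence enters only through the common term $-2(U-E)\lambda$, and the factor $\iint_{\overline\Lambda}(U-E)\, d\theta_1\, d\theta_2$ is nonzero by the hypothesis $\pm(U-E)>\delta>0$. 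Subtracting the two conditions thus reduces the claim to showing
$$
\iint_{\overline\Lambda} \operatorname{Re}\bigl(\mathcal H_{1,TW}^\pm - \mathcal H_1^\pm\bigr)\, d\theta_1\, d\theta_2 = 0.
$$

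Next I would pass to the polar-coordinate form of the extra term already displayed in the excerpt,
$$
\mathcal H_{1,TW}^\pm - \mathcal H_1^\pm = -2\gamma\bigl(p_r(p_r^2 - 3\mathbf R_\varphi^2)\cos 3\varphi + \mathbf R_\varphi(\mathbf R_\varphi^2 - 3 p_r^2)\sin 3\varphi\bigr),
$$
and use the angle coordinates on $\overline\Lambda$. Because the principal symbol $\mathcal H_0$ is independent of $\varphi$, the conjugate momentum $p_\varphi$ is conserved on the torus; consequently $p_r = \mathcal P(\theta_1)$, $r = \mathcal R(\theta_1)$, and $\mathbf R_\varphi = p_\varphi/r - Br/2$ are all functions of $\theta_1$ alone. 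Meanwhile, the relation $\theta_2 = \varphi - \tfrac{2p_\varphi}{\omega_1}Q(\theta_1)$ shows that for every fixed $\theta_1$ the map $\theta_2 \mapsto \varphi$ is a rigid translation of $\mathbb R/2\pi\mathbb Z$.

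The key step is then the elementary fact $\int_0^{2\pi} \cos(3\theta_2 + c)\, d\theta_2 = \int_0^{2\pi}\sin(3\theta_2 + c)\, d\theta_2 = 0$ for any constant $c$. Applying Fubini and performing the $\theta_2$-integration at fixed $\theta_1$, the inner integral vanishes at every $\theta_1$, so the full double integral is zero. The two averaging equations then determine $\lambda$ and $\lambda_{TW}$ by the same linear equation, whence $\lambda = \lambda_{TW}$.

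I do not anticipate a substantial obstacle: the only technical care is in invoking the angle-coordinate parametrization of $\overline\Lambda$ and the fact that the invariant measure is $d\theta_1 \wedge d\theta_2$, both recorded in~\cite{Constr}. Conceptually, the statement is a symmetry/averaging assertion: the trigonal-warping perturbation produces only third harmonics in $\varphi$, while the invariant tori of the radially symmetric principal Hamiltonian are themselves invariant under rotations $\varphi \mapsto \varphi + c$, so every nontrivial $\varphi$-harmonic has zero mean. The radial-mass case $M=M(r)$ is handled identically, since the trigonal-warping piece has the same form there.
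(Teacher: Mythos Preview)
Your proposal is correct and follows essentially the same route as the paper: both reduce the question to the vanishing of the torus average of the trigonal-warping term and exploit that $p_r$, $r$, $\mathbf R_\varphi$ depend only on $\theta_1$ while $\varphi = \theta_2 + \tfrac{2p_\varphi}{\omega_1}Q(\theta_1)$, so the $\cos 3\varphi$, $\sin 3\varphi$ factors integrate to zero over the angular variable. The only cosmetic difference is that the paper performs an explicit change of variables $(\theta_1,\theta_2)\to(t,\varphi)$ with Jacobian $\omega_1$ before integrating in $\varphi$, whereas you integrate directly in $\theta_2$ using the rigid-translation observation; the two computations are equivalent.
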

\begin{proof}
Let us look at the formulas for $\lambda$ and $\lambda_{TW}$, which follow from~\eqref{lambda_condition}:
$$
\lambda = \frac{\iint_{\overline\Lambda} \operatorname{Re} H_{1} (p,x,E,\widetilde\lambda)\big|_{\overline\Lambda,\ \widetilde\lambda=0}\, d\theta_1\, d\theta_2}{4\pi \int_0^{2\pi} (U(R(\theta_1))-E)\, d\theta_1},
\quad
\lambda_{TW} = \frac{\iint_{\overline\Lambda} \operatorname{Re} H_{1,TW} (p,x,E,\widetilde\lambda)\big|_{\overline\Lambda,\ \widetilde\lambda=0}\, d\theta_1d\theta_2}{4\pi \int_0^{2\pi} (U(R(\theta_2))-E)\, d\theta_1}.
$$
Note that $U\ne E$ everywhere, so $4\pi \int_0^{2\pi} (U(R(\theta_1))-E)\, d\theta_1 = C\ne 0$.
Now subtract one from another and substitute the formulas for $\mathcal H_1$ and $\mathcal H_{1,TW}$ in polar coordinates:
$$
\lambda_{TW}-\lambda = -2\frac{\gamma}{C}\iint_{\overline\Lambda} f(\theta_1)\cos{3\varphi(\theta_1,\theta_2)} + g(\theta_1) \sin{3\varphi(\theta_1,\theta_2)}\, d\theta_1\, d\theta_2,
$$
where $\varphi(\theta_1,\theta_2) = \theta_2 + \frac{2p_\varphi}{\omega_1} Q(\theta_1)$, and $f(\theta_1)$, $g(\theta_1)$ denote certain functions of $\theta_1$, which appear in the formula for $H_{1,TW}$. Let us perform a~variable substitution $\theta\to(t,\varphi)$ in the integral: $\theta_1 = \omega_1 t$, $\theta_2 = \varphi - \frac{2 p_\varphi}{\omega_1}Q(\theta_1) = \varphi - \frac{2 p_\varphi}{\omega_1}Q(\omega_1 t)$, so $\big|\operatorname{det}\frac{\partial \theta}{\partial (t, \varphi)}\big| = \omega_1$; thus,
$$
\lambda_{TW}-\lambda = -2\frac{\gamma}{C}\iint_{(0, 0)}^{(T,2\pi)} \big( f(\omega_1 t)\cos{3\varphi} + g(\omega_1 t) \sin{3\varphi}\big) \omega_1\, dt\, d\varphi.
$$
Integrating with respect to~$\varphi$, we finally get $\lambda_{TW}-\lambda = 0$.
\end{proof}

In practical use of the formula~\eqref{main_repr}, it would be convenient to take only a~few Fourier coefficients of the amplitude. In this case, what is the contribution of remainder of the Fourier series to the solution? In fact, this contribution is small. The following theorem provides a~mathematical justification for disregarding the tail of the series.
\begin{theorem}
Let conditions of Theorem~\ref{TheoremConstr} hold.
Set 
$$
A_N^\pm(\theta) = \frac{1}{2}\int W^{\pm}(r)\, dr\big|_{r = R(\theta)} \exp{\Big(\sum_{|k_1|\le N} \big( \frac{c_{k_1,-3}}{\omega_1 k_1 - 3\omega_2} e^{-3i\theta_2} + \frac{c_{k_1,0}}{\omega_1 k_1} + \frac{c_{k_1,3}}{\omega_1 k_1 + 3\omega_2} e^{3i\theta_2}\big)e^{i k_1 \theta_1} + \frac{i}{h} \langle q,\theta\rangle\Big)},
$$
where $c_k \equiv c_{k_1,k_2}$ are the Fourier coefficients of $\operatorname{Re} H_{1,TW}^\pm$. Then the following estimate
$$
\| \widehat{\mathcal L}\Psi_{\pm} -\mathscr E \Psi_{\pm}\|_{L^2} = O\Big(h^{3/2} + \frac{1}{N^m}\Big)
$$
holds for any $m>0$.
\end{theorem}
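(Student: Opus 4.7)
The plan is a triangle-inequality split. With $\Psi_\pm := \widehat\chi^\pm[\sqrt r\, K_{\overline\Lambda} A_\nu^\pm]\big|_{(r,\varphi)\to(x_1,x_2)}$ and $\Psi_\pm^{(N)} := \widehat\chi^\pm[\sqrt r\, K_{\overline\Lambda} A_N^\pm]\big|_{(r,\varphi)\to(x_1,x_2)}$, Theorem~\ref{TheoremConstr} already gives $\|(\widehat{\mathcal L}_{TW}-\mathscr E)\Psi_\pm\|_{L^2}=O(h^{3/2})$, so it remains to prove
\begin{equation*}
\bigl\|(\widehat{\mathcal L}_{TW}-\mathscr E)\widehat\chi^\pm\bigl[\sqrt r\,K_{\overline\Lambda}(A_\nu^\pm-A_N^\pm)\bigr]\big|_{(r,\varphi)\to(x_1,x_2)}\bigr\|_{L^2}=O(N^{-m})
\end{equation*}
for every $m>0$. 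This factors into an amplitude estimate and a propagation estimate.

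For the amplitude estimate, the exponents $a_\pm$ and $a_N^\pm$ differ only by the Fourier tail
\begin{equation*}
a_\pm-a_N^\pm=\sum_{|k_1|>N,\ k_2\in\{-3,0,3\}}\frac{c_{k_1,k_2}}{\omega_1 k_1+k_2\omega_2}\,e^{i(k_1\theta_1+k_2\theta_2)}.
\end{equation*}
Since $\operatorname{Re}\mathcal H_{1,TW}^\pm\big|_{\overline\Lambda}$ is $C^\infty$ on the two-torus $\overline\Lambda$ (it is a trigonometric polynomial of degree $3$ in $\theta_2$ with $C^\infty$ coefficients in $\theta_1$), its Fourier coefficients $c_{k_1,k_2}$ decay faster than any polynomial in $|k_1|$. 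The Diophantine condition (or, by the remark following Theorem~\ref{TheoremConstr}, the pertinent nonresonance condition) keeps $|\omega_1 k_1+k_2\omega_2|^{-1}$ of at most polynomial growth in $|k_1|$, in fact $\asymp|k_1|^{-1}$ for large $|k_1|$ since $|k_2|\le 3$. Hence $\|a_\pm-a_N^\pm\|_{C^k(\overline\Lambda)}=O(N^{-m})$ for every $k,m$, the polynomial cost of each $\theta$-derivative being absorbed by the arbitrariness of $m$. Because $a_\pm$ is uniformly bounded on $\overline\Lambda$ and $|e^{i\langle q,\theta\rangle/h}|\equiv 1$ (as $q=O(h)\in\mathbb R^2$), exponentiation preserves the bound: $\|A_\nu^\pm-A_N^\pm\|_{C^k(\overline\Lambda)}=O(N^{-m})$.

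For the propagation estimate, the scalarization $\widehat{\mathcal L}_{TW}\widehat\chi^\pm=\widehat\chi^\pm\widehat{\mathbb H}_{TW}^\pm+O(h^2)$ combined with the standard commutation of $\widehat{\mathbb H}_{TW}^\pm-\mathscr E$ with the Maslov canonical operator on a Lagrangian manifold whose principal symbol vanishes on shell gives, for any $B\in C^\infty(\overline\Lambda)$,
\begin{equation*}
(\widehat{\mathcal L}_{TW}-\mathscr E)\widehat\chi^\pm[\sqrt r\,K_{\overline\Lambda}B]=h\cdot\widehat\chi^\pm[\sqrt r\,K_{\overline\Lambda}(T_\pm B)]+O(h^2\|B\|_{C^2}),
\end{equation*}
where $T_\pm$ is the first-order transport operator used in Theorem~\ref{TheoremConstr}. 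Substituting $B=A_\nu^\pm-A_N^\pm$ and using the uniform $L^2$-bound of $K_{\overline\Lambda}$ against $\|B\|_{C^0}$ (as made explicit by the Airy representation~\eqref{main_repr}) yields $O(h/N^m)$ in $L^2$; together with the $O(h^{3/2})$ from Theorem~\ref{TheoremConstr}, the bound $O(h^{3/2}+N^{-m})$ follows by the triangle inequality.

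The hard part will be keeping the FIO-calculus remainders in the propagation step uniform in $N$, particularly across the caustics where~\eqref{main_repr} replaces $K_{\overline\Lambda}$ by its Airy-type representation: one has to verify that inserting an arbitrary smooth tail amplitude (rather than a solution of the transport equation) does not introduce negative powers of $h$ or $N$ into the $L^2$-bound. Everything else reduces to rapid decay of Fourier coefficients and polynomial control of small denominators, which are standard consequences of smoothness and the nonresonance hypothesis.
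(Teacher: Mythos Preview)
Your proposal is correct and follows essentially the same route as the paper: both arguments reduce the question to bounding the Fourier tail of the amplitude exponent, using that $\operatorname{Re}H_{1,TW}^\pm\big|_{\overline\Lambda}$ is a trigonometric polynomial in $\theta_2$ with $C^\infty$ coefficients in $\theta_1$ (so $c_k=O(|k_1|^{-m})$ for all $m$) together with the nonresonance/Diophantine control of the denominators $\omega_1 k_1+k_2\omega_2$. The paper carries this out by directly applying $(-ih\partial/\partial\theta_1)^3$ to $A_\infty^\pm-A_N^\pm$ and estimating term by term, whereas you package the same computation as a $C^k$-norm estimate on the amplitude followed by a propagation step through $K_{\overline\Lambda}$; these are the same idea at different levels of explicitness.
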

\begin{proof}
The correction can be written in the form
$$
\operatorname{Re} H_{1,TW}^{\pm} = f_{-3}(\theta_1) e^{-3i \theta_2} + f_{0}(\theta_1)+ f_{3}(\theta_1) e^{3i \theta_2}, \quad f_j \in C^\infty(\mathbb S^1, \mathbb C).
$$
Let us write the formula for $A_N^\pm(\theta)$ with vanishing action defect:
$$
A_N^\pm(\theta) = \frac{1}{2} \int W^\pm (r)\, dr \exp{\Big(\sum_{|k_1|\le N} \frac{c_k}{\langle \omega,\rangle k} e^{i\langle k, \theta \rangle}\Big)}.
$$
Obviously, $A_N^\pm \rightrightarrows A_\infty^\pm \equiv A^\pm$ from Theorem~\ref{TheoremConstr}.
Since functions~$f_j$ are infinitely differentiable, $c_k = O(k^{-m-4})$ for any~$m>-2$. Let us apply $(-ih \partial/\partial\theta_1)^3$ to 
$$
A_\infty^\pm - A_N^\pm = (\frac{1}{2}\int W^\pm (r)\, dr)\times \exp{\Big(\sum_{|k_1|\le N} \frac{c_k}{\langle \omega, k\rangle} e^{i \langle k,\theta\rangle} \Big)} \times \Big( \exp{\Big(\sum_{|k_1| > N} \frac{c_k}{\langle \omega, k\rangle} e^{i \langle k,\theta\rangle} \Big)} - 1\Big)
$$
and estimate it in $L^2$ norm. 
Application of the operator to the first factor yields the terms $O\Big(\frac{1}{N^{m+2}}\Big)$; 
when applied to the second and third factors, the worst-order terms take the form
$$
C \frac{k^3 c_k}{\langle \omega, k\rangle} = O(k^{-m-2}),\quad \text{which gives } O(N^{-m}) \text{ for the whole series}.
$$
Thus, we get $$ \|(-ih \partial/\partial\theta_1)^3 (A^\pm-A_N^\pm)\|_{L^2} \le C_1 O\Big(\frac{1}{N^{m+2}}\Big) + C_2 O\Big(\frac{1}{N^{m}}\Big) + C_3 O\Big(\frac{1}{N^{m}}\Big) = O\Big(\frac{1}{N^{m}}\Big).$$
\end{proof}

\section{Example}
Let us consider an~example that both could be meaningful for physics and effectively demonstrates the effect caused by trigonal warping.

Let $E_0 = 0.85$ eV, $l = 55$ nm, $\mathbf B = 7$ T, $l_B = 10$ nm, $Z = 29$ protons per core (i.e., impurity is given by some isotope of copper), $U(r) = - \frac{A}{r}$, where
$$
A = \frac{1}{E_0} \frac{1}{4\pi \epsilon_0} \frac{Z e}{l \cdot 10^{-9}} \approx 0.893663, 
$$
$$
h = \frac{6.242\cdot 10^{18} \hbar v_F}{E_0 l \cdot 10^{-9}} \approx 0.0858163,
$$
$B = \frac{v_F \mathbf{B} \hbar l 10^{-9}}{E_0} \approx 0.439353$, $\nu_1 = 16$, $\nu_2 = 15$, $\gamma = \frac{E_0}{6 t h} \approx 0.550271$, $M = 0.7$, $E = 0.935$. Then $E_\nu\approx 0.939054$ and $\lambda\approx 0.244074$, i.e., $\mathscr E = E + h\lambda \approx 0.95999952$.
Now that all the parameters are set, we can plot~$\Psi$ and~$\Psi_{TW}$ using formula~\eqref{main_repr} with $A_N$, $N=40$ in the unperturbed case and $N=70$ for the perturbed one.
The plots of first components of pseudospinors, i.e., $\operatorname{Re}\Psi_1^+$ and $\operatorname{Re}\Psi_1^{+,TW}$, are depicted in Figs.~\ref{fig3}~and~\ref{fig4} respectively.

\begin{figure}[h]
\begin{minipage}[h]{0.4\linewidth}
\center{\includegraphics[width=\linewidth]{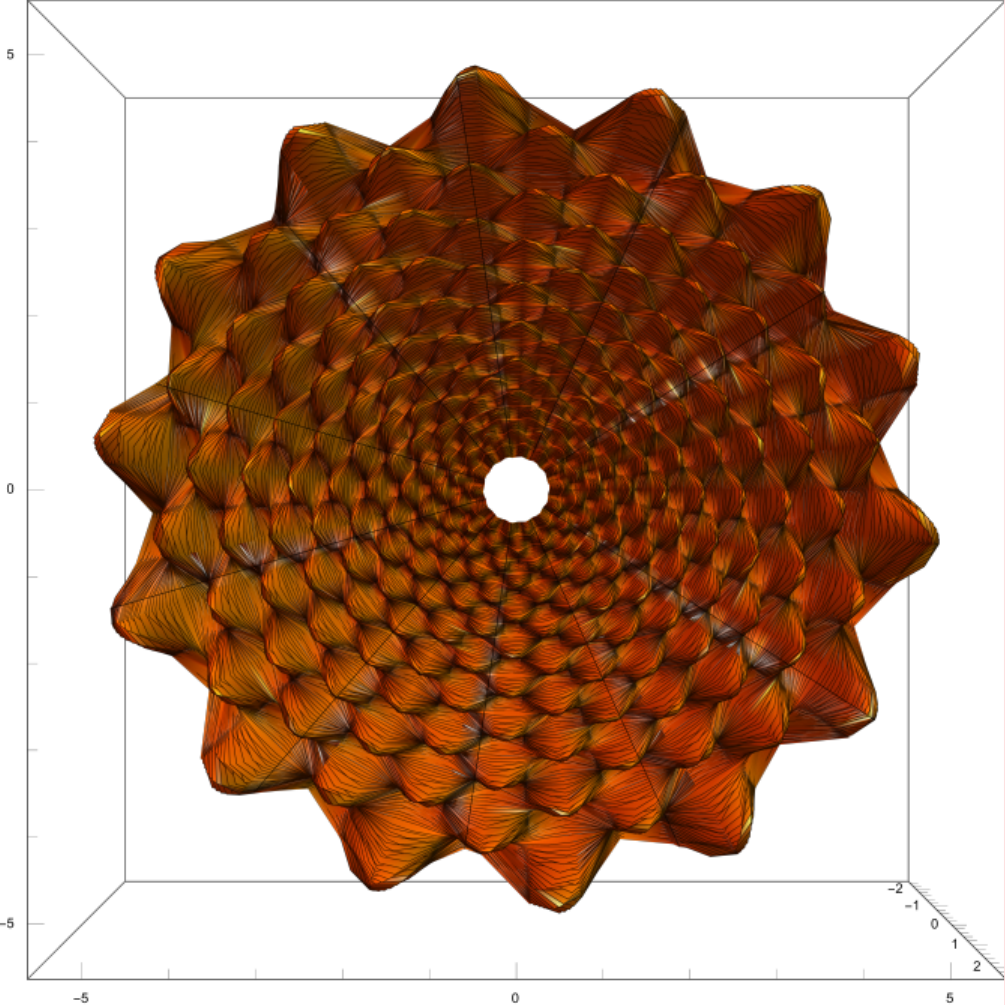}}
\caption{$\operatorname{Re}\Psi_1^+$}
\label{fig3}
\end{minipage}
\hfill
\begin{minipage}[h]{0.4\linewidth}
\center{\includegraphics[width=\linewidth]{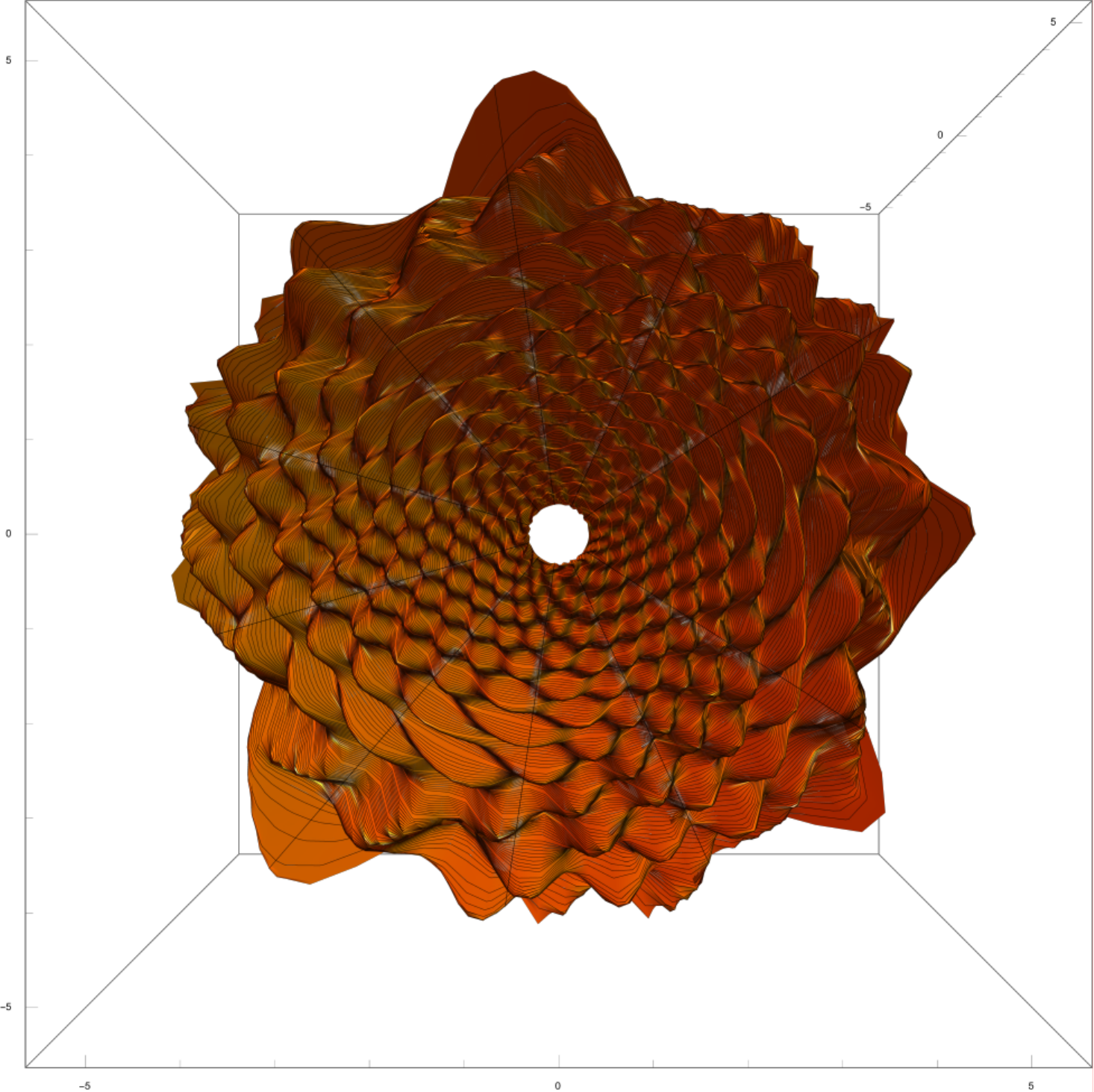}}
\caption{$\operatorname{Re}\Psi_1^{+,TW}$}
\label{fig4}
\end{minipage}
\end{figure}

The density plots of~$|\Psi|^2 = |\Psi_1|^2+|\Psi_2|^2$ depicted in Fig.~\ref{fig5}~and~\ref{fig6} (nonperturbed and perturbed cases respectively) are more illustrative in this case. These plots are expected to coincide with what is observed with STM for the corresponding bound states.
\begin{figure}[h]
\begin{minipage}[h]{0.4\linewidth}
\center{\includegraphics[width=\linewidth]{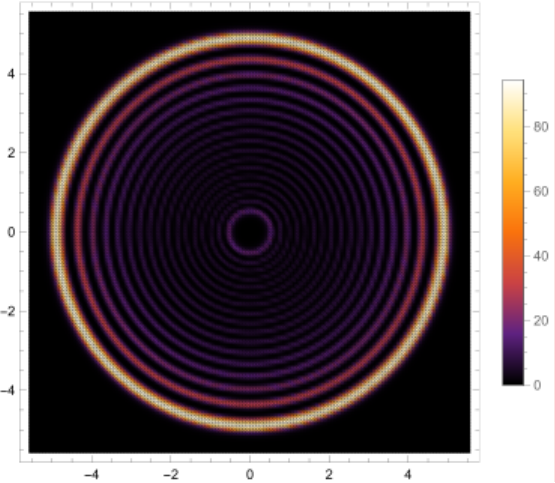}}
\caption{Density plot of $|\Psi^+|^2$}
\label{fig5}
\end{minipage}
\hfill
\begin{minipage}[h]{0.4\linewidth}
\center{\includegraphics[width=\linewidth]{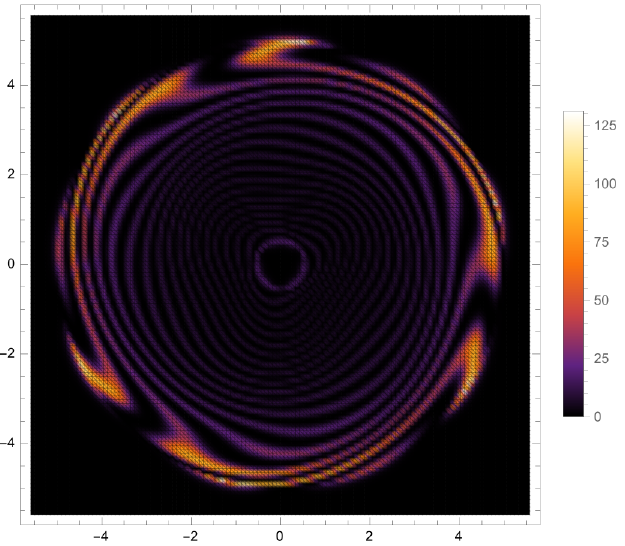}}
\caption{Density plot of $|\Psi_{TW}^+|^2$}
\label{fig6}
\end{minipage}
\end{figure}

\section*{Acknowledgments}
The author wishes to express gratitude to Dr.~Koen Reijnders from Radboud University and Dr.~Anatoly Anikin for helpful discussions and for their valuable advice.


\begin{thebibliography}{1}
	\bibitem{Dioph}
	A.~Yu.~Anikin, S.~Yu.~Dobrokhotov, Diophantine tori and pragmatic calculation of quasimodes for operators with integrable principal symbol. Russ. J. Math. Phys. 23(3), 299--308 (2020).
	\bibitem{AniDobrKlevTir}
	 A.~Yu.~Anikin, S.~Yu.~Dobrokhotov, A.~I.~Klevin, B.~Tirozzi, Scalarization of stationary semiclassical problems for systems of equations and its application in plasma physics. TMF, 193:3 (2017), 409--433; Theoret. and Math. Phys., 193:3 (2017), 1761--1782.
	\bibitem{4authorsAiry}
	A.~Yu.~Anikin, S.~Yu.~Dobrokhotov, V.~E.~Nazaikinskii, A.~V.~Tsvetkova, Uniform asymptotic solution in the form of an Airy function for semiclassical bound states in one-dimensional and radially symmetric problems. TMF, 201:3 (2019), 382--414; Theoret. and Math. Phys., 201:3 (2019), 1742--1770.
	\bibitem{AnDobrTsv}
	A.~Yu.~Anikin, S.~Yu.~Dobrokhotov, A.~V.~Tsvetkova, Airy function and transition between the semiclassical and harmonic oscillator approximations for one-dimensional bound states. TMF, 204:2 (2020), 171--180; Theoret. and Math. Phys., 204:2 (2020), 984--992.
	\bibitem{Constr}
	A.~Y.~Anikin, V.~V.~Rykhlov, Constructive Semiclassical Asymptotics of Bound States of Graphene in a~Constant Magnetic Field with Small Mass. Math Notes 111, 173--192 (2022).
	\bibitem{BrDobrKatsMin}
	J.~Br\"uning, S.~Yu.~Dobrokhotov, M.~I.~Katsnelson, D.~S.~Minenkov, Semiclassical asymptotic approximations and the density of states for the two-dimensional radially symmetric Schr\"odinger and Dirac equations in tunnel microscopy problems, TMF, 186:3 (2016), 386--400; Theoret. and Math. Phys., 186:3 (2016), 333--345.
	\bibitem{CdV}
	Y.~{Colin de Verdi\`ere}, Spectre conjoint d'{op\'erateurs pseudo-diff\'erentiels   qui commutent. I- Le cas non int\'egrable.}. Duke Math. J. 46:1, 169--182 (1979).
	\bibitem{DobrMinRo}
	S.~Yu.~Dobrokhotov, D.~S.~Minenkov, M.~Rouleux, The Maupertuis--Jacobi Principle for Hamiltonians of the Form F(x,|p|) in Two-Dimensional Stationary Semiclassical Problems. Mat. Zametki, 97:1 (2015), 48--57; Math. Notes, 97:1 (2015), 42--49.
	\bibitem{DNAiry}
	S.~Yu.~Dobrokhotov, V.~E.~Nazaikinskii, Efficient Formulas for the Maslov Canonical Operator near a Simple Caustic. Russ. J. Math. Phys. 25, 545–552 (2018).
	\bibitem{Feyn}
	R.~P.~Feynman, An~operator calculus having applications in quantum electrodynamics. Phys. Rev. 84:2 (1951), 108--128.
	\bibitem{Kats}
	M.~I.~Katsnelson, Graphene. Carbon in Two Dimensions, Cambridge Univ. Press, Cambridge, 2012.
	\bibitem{Laz}
	V.~F.~Lazutkin, Quasiclassical asymptotic behavior of eigenfunctions. Partial differential equations~--~5. Itogi Nauki i Tekhniki. Ser. Sovrem. Probl. Mat. Fund. Napr. Vol. 34, 135--174 (1988), VINITI, Moscow. [in~Russian]
	\bibitem{LazKAM} 
	V.~F.~Lazutkin, KAM Theory and Semiclassical Approximations to Eigenfunctions (Springer-Verlag, Berlin, 1993).
	\bibitem{OpMet}
	V.~P.~Maslov, Operator Methods. Mir, Moscow, 1976. [in~Russian]
	\bibitem{Masl}
	V.~P.~Maslov, M.~V.~Fedoryuk, Semiclassical Approximation for Equations of Quantum Mechanics. Nauka, Moscow, 1976. [in~Russian]
	\bibitem{Koen}
	K.~J.~A.~Reijnders, Semiclassical dynamics of charge carriers in graphene. Radboud University, Nijmegen, 2019.
	\bibitem{Tersoff}
	J.~Tersoff, D.~R.~Hamann, Theory of the scanning tunneling microscope. Phys. Rev. B 31:2 (1985), 805--813.
	\bibitem{Ukr}
	V.~A.~Ukraintsev, Data evaluation technique for electron-tunneling spectroscopy. Phys. Rev. B 53:16 (1996), 11176--11185.
\end{thebibliography}
\end{document}